\newtheorem{lemma}{Lemma}
\begin{document}

\title{Efficient Gaussian State Preparation in Quantum Circuits}

\author{
    \IEEEauthorblockN{Yichen Xie}
    \IEEEauthorblockA{
        La Salle College, HKSAR\\
        Email: s21325@lsc.hk
    }
    \and
    \IEEEauthorblockN{Nadav Ben-Ami}
    \IEEEauthorblockA{
        Classiq Technologies\\
        Email: nadav.ben-ami@classiq.io
    }
}

\maketitle

\begin{abstract}
Gaussian states hold a fundamental place in quantum mechanics, quantum information, and quantum computing. Many subfields, including quantum simulation of continuous-variable systems, quantum chemistry, and quantum machine learning, rely on the ability to accurately and efficiently prepare states that reflect a Gaussian profile in their probability amplitudes. Although Gaussian states are natural in continuous-variable systems, the practical interest in digital, gate-based quantum computers demands discrete approximations of Gaussian distributions over a computational basis of size \(2^n\). Because of the exponential scaling of naive amplitude-encoding approaches and the cost of certain block-encoding or Hamiltonian simulation techniques, a resource-efficient preparation of approximate Gaussian states is required. In this work, we propose and analyze a circuit-based approach that starts with single-qubit rotations to form an exponential amplitude profile and then applies the quantum Fourier transform to map those amplitudes into an approximate Gaussian distribution. We demonstrate that this procedure achieves high fidelity with the target Gaussian state while allowing optional pruning of small controlled-phase angles in the quantum Fourier transform, thus reducing gate complexity to near-linear in \(\mathcal{O}(n)\). We conclude that the proposed technique is a promising route to make Gaussian states accessible on noisy quantum hardware and to pave the way for scalable implementations on future devices. The implementation of this algorithm is available at the Classiq library: \verb|https://github.com/classiq/classiq-library|.
\end{abstract}

\begin{IEEEkeywords}
Quantum Computing, Gaussian States, State Preparation, Quantum Fourier Transform, Discrete Approximations, Fidelity Analysis
\end{IEEEkeywords}

\section{Introduction}

Gaussian states have a huge significance in quantum theory, particularly because they represent physically realizable states of continuous-variable systems under relatively general conditions, especially in the context of the quantum harmonic oscillator \cite{NielsenChuang2010}. From a theoretical standpoint, Gaussian wavefunctions in the position or momentum representation are often the ground states or thermal states of many-qubit (or many-mode) systems characterized by quadratic Hamiltonians \cite{Menicucci2011}. In quantum computing, these states arise in simulation tasks aiming to capture aspects of vibrational modes, field modes, or distributions essential to quantum chemistry and quantum field theory \cite{Kitaev2008}. In addition, certain quantum machine learning algorithms gain efficiency or interpretability by leveraging Gaussian-like initial states that represent data distributions in a natural continuous form \cite{McClean2016}.

It is therefore evident that the ability to prepare a Gaussian state efficiently constitutes a crucial stepping stone toward a wide class of applications. These applications include, but are not limited to, pricing options in finance \cite{Stamatopoulos_2020, Stamatopoulos_2024} using quantum algorithms that model diffusion processes with Gaussian components, approximating certain quantum fields in high-energy physics, and building kernel methods in quantum machine learning that exploit Gaussian features in high-dimensional Hilbert spaces. If we consider a domain \([-2,2)\) subdivided into \(2^n\) points, the problem is to map each of those discrete points to an amplitude whose magnitude is proportional to a continuous Gaussian function restricted and sampled on that domain. This objective is challenging because naive amplitude encoding, which involves loading each probability amplitude independently, can have costs scaling with \(2^n\) in classical precomputation and can further entail deep quantum circuits. Some advanced methods use Hamiltonian simulation of operators proportional to \(\exp(-\lambda \hat{x}^2)\), but such approaches often require trotterization and large circuit depths \cite{coppersmith2002approximatefouriertransformuseful}.

The approach we investigate revolves around two essential observations. The first is that single-qubit \(R_y\) gates, when carefully tuned, can construct a product state whose probabilities decay steeply in certain index patterns \cite{williams2011explorations}. The second observation is that applying the quantum Fourier transform \cite{coppersmith2002approximatefouriertransformuseful} to that product state can yield superpositions resembling a discrete Gaussian distribution \cite{Gily_n_2019}. 

\section{Background and Preliminaries}

A continuous Gaussian function in a domain such as \([-2,2)\) with decay rate \(\lambda\) and mean 0 can be written as
\begin{equation}
f(x) = \exp(-\lambda x^2).
\end{equation}
where the standard deviation 
\begin{equation}
\sigma = \frac{1}{\sqrt{2\,\lambda}}.
\end{equation}
To discretize this into \(2^n\) points, one may partition the interval into \(2^n\) equally spaced coordinates
\begin{equation}
x_k = -2 + k \cdot \Delta, \quad \Delta = \frac{4}{2^n}, \quad k = 0,1,\ldots,2^n-1.
\end{equation}
The probability mass function is then given by
\begin{equation}
G(x_k) = \frac{\exp(-\lambda x_k^2)}{\sum_{j=0}^{2^n - 1} \exp(-\lambda x_j^2)}.
\end{equation}
In a quantum algorithm, preparing a state
\begin{equation}
\sum_{k=0}^{2^n - 1} \sqrt{G(x_k)} \, |k\rangle
\end{equation}
is tantamount to encoding the normalized square root of the Gaussian probabilities into amplitude magnitudes. This encoding is not trivial because each of the amplitudes might need to be specified or computed separately. The complexity of naive amplitude-loading can be very large if it involves any form of repeated controlled rotations or iterative data injection for each amplitude \cite{McClean2016}.

The quantum Fourier transform (QFT) on \(n\) qubits transforms a basis state \(|j\rangle\) into a superposition
\begin{equation}
\mathrm{QFT} \, |j\rangle = \frac{1}{\sqrt{2^n}} \sum_{k=0}^{2^n - 1} e^{2\pi i j k / 2^n} \, |k\rangle.
\end{equation}
In circuit form, it is well-known that QFT can be decomposed into a sequence of Hadamard gates, controlled-phase gates (with angles scaling as \(\pi / 2^m\)), and swap gates to reverse the bit order. This decomposition in its standard form requires \(\mathcal{O}(n^2)\) gates. However, controlled-phase gates whose angles are extremely small (such as \(2\pi / 2^m\) for large \(m\)) can often be omitted with negligible loss in fidelity, bringing the complexity closer to \(\mathcal{O}(n)\) for large \(n\) \cite{Menicucci2011}.

Single-qubit rotations of the form
\begin{equation}
R_y(\theta) = 
\begin{bmatrix}
\cos(\tfrac{\theta}{2}) & -\sin(\tfrac{\theta}{2}) \\
\sin(\tfrac{\theta}{2}) & \cos(\tfrac{\theta}{2})
\end{bmatrix}
\end{equation}
play a major role in many amplitude-encoding subroutines. When \(R_y(\theta)\) is applied to \(|0\rangle\), the resulting state is
\begin{equation}
\cos\Bigl(\frac{\theta}{2}\Bigr)|0\rangle \;+\; \sin\Bigl(\frac{\theta}{2}\Bigr)|1\rangle.
\end{equation}
If one sets \(\theta = 2 \arctan\bigl(e^{-\alpha}\bigr)\) for some positive \(\alpha\), then the amplitude in \(|1\rangle\) is roughly \(\exp(-\alpha)\) if that exponential is not too large \cite{Iaconis2024}. By assigning different angles to different qubits, it is possible to engineer a product state whose basis amplitudes follow some partial exponential decay pattern in the index.

The main insight behind our approach is that applying QFT to a bitwise exponential distribution can yield an approximated Gaussian in the final computational basis \cite{Rebentrost2014}. In the discrete sense, the exponentials introduced at the bit level combine with the exponential of the QFT's phase factors to produce a final amplitude shape that is close to a discrete Gaussian distribution. This phenomenon will be described in detail in the next section.

\section{Proposed Methodology and Error Analysis}

\subsection{Overview}

We now describe how to construct our circuit so that it produces a Gaussian-like state across \(2^n\) computational basis states \cite{Lloyd2014}.

We first initialize all \(n\) qubits in the state \(|0\rangle^{\otimes n}\) and subsequently apply single-qubit rotations \(R_y(\theta_j)\) on qubit \(j\) \cite{coppersmith2002approximatefouriertransformuseful} for all qubits. This yields a exponential-like state, upon which we perform the QFT over all \(n\) qubits \cite{williams2011explorations}. A \(X\) gate is applied to the highest-index qubit for domain alignment \cite{Menicucci2011} to obtain the final Gaussian state.

The angles for the RY gates are chosen such that the amplitude in qubit \(j\) favoring \(|1\rangle\) is proportional to \(\exp(-\gamma \, j^2)\) for some positive parameter \(\gamma\) \cite{McClean2016}. This choice ensures that high-index qubits are overwhelmingly in the \(|0\rangle\) state, thereby skewing the overall index distribution. The quantum Fourier transform then spreads these amplitudes in a structured manner, producing a near-Gaussian profile in the final state \cite{Gily_n_2019}. A 5-qubits example of this circuit we have just described in shown in Fig.~\ref{fig:circuit_5q}.

\begin{figure}[h]
\centering
\includegraphics[width=\linewidth]{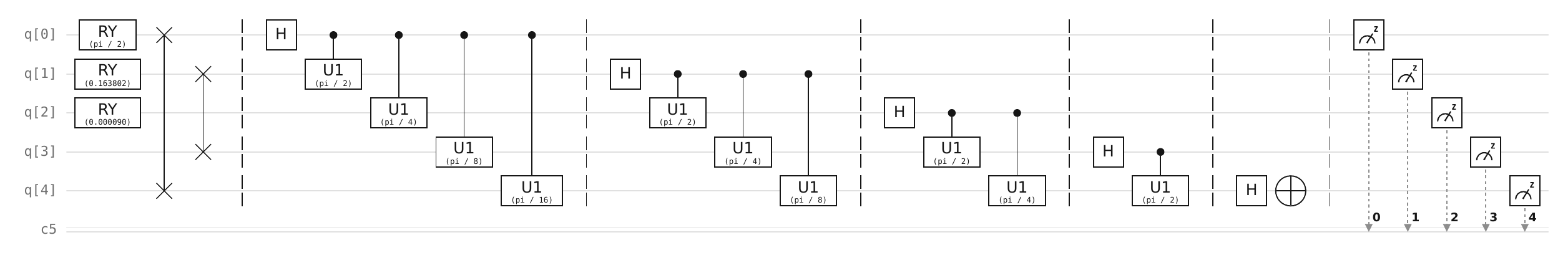}
\caption{The Gaussian preparation circuit we proposed with 5 qubits.}
\label{fig:circuit_5q}
\end{figure}

\subsection{State Construction with Bitwise Exponential Rotations}

We assume we have qubits labeled from \(0\) to \(n-1\) \cite{Kitaev2008}. For each qubit \(j\), with \(j \in \{0,1,2,\ldots,n-1\}\), we define an angle
\begin{equation}
\theta_j = 2 \arctan \bigl( e^{-\beta j^2} \bigr).
\end{equation}
We pick a positive parameter \(\beta\) that relates to the overall decay rate and can be matched to the desired Gaussian scale \(\lambda\) by \(\beta=\frac{k}{\lambda}\). By trial-and-error during experiments, we found the optimal value for \(k\) to be approximately \(\frac{5}{2}=2.5\) \cite{Holmes2020}, therefore \(\beta \approx \frac{5}{2\lambda}\). The product state after applying \(R_y(\theta_j)\) on qubit \(j\) is
\begin{equation}
|\Phi\rangle = \bigotimes_{j=0}^{n-1} \Bigl( \cos(\tfrac{\theta_j}{2}) |0\rangle + \sin(\tfrac{\theta_j}{2}) |1\rangle \Bigr).
\end{equation}
If we index computational basis states by a binary string \((x_{n-1} x_{n-2} \dots x_0)\) that corresponds to an integer \(x \in \{0,\ldots,2^n -1\}\), then
\begin{equation}
|\Phi\rangle
=
\sum_{x=0}^{2^n - 1}
\alpha_x \, |x\rangle,
\end{equation}
where
\begin{equation}
\alpha_x 
=
\prod_{j=0}^{n-1}
\Bigl(\cos(\tfrac{\theta_j}{2})\Bigr)^{1 - x_j}
\Bigl(\sin(\tfrac{\theta_j}{2})\Bigr)^{x_j}.
\end{equation}
Because \(\sin(\tfrac{\theta_j}{2}) \approx e^{-\beta j^2}\) for moderate \(\beta j^2\) and \(\cos(\tfrac{\theta_j}{2}) \approx 1/\sqrt{1 + e^{-2 \beta j^2}}\), these amplitudes produce a distribution that heavily weights states with many zero bits in positions of large \(j\), leading to a partial exponential decay \cite{Iaconis2024}. This is not our final target, but an intermediate distribution we shall feed into the QFT.

\subsection{Applying the Quantum Fourier Transform}

We then apply the QFT on the full \(n\)-qubit register \cite{Rattew2021}. The QFT transforms a basis state \(|x\rangle\) into
\begin{equation}
\mathrm{QFT}\,|x\rangle
=
\frac{1}{\sqrt{2^n}} \sum_{k=0}^{2^n - 1}
\exp\Bigl(\frac{2\pi i x k}{2^n}\Bigr)
\, |k\rangle.
\end{equation}
Hence, if we write
\begin{equation}
|\Phi\rangle = \sum_{x=0}^{2^n - 1} \alpha_x \, |x\rangle,
\end{equation}
the final state after the QFT is
\begin{equation}
|\Psi\rangle = \mathrm{QFT} \, |\Phi\rangle
= 
\sum_{k=0}^{2^n -1} 
\left(
\frac{1}{\sqrt{2^n}}
\sum_{x=0}^{2^n -1} \alpha_x \, e^{2\pi i xk / 2^n}
\right)
|k\rangle.
\end{equation}
Let \(\beta_k\) be the coefficient of \(|k\rangle\). We then have
\begin{equation}
\beta_k 
=
\frac{1}{\sqrt{2^n}}
\sum_{x=0}^{2^n -1} \alpha_x \, e^{2\pi i xk / 2^n}.
\end{equation}
We seek to show that \(\beta_k\) approximates the shape of a Gaussian in \(k\) \cite{RamosCalderer2019}. Because \(\alpha_x\) is dominated by a product of decaying terms in the binary decomposition of \(x\), the sum effectively picks up constructive phases that yield a distribution reminiscent of a discrete Gaussian in \(k\) \cite{Zoufal2019}. In continuous analogies, the Fourier transform of \(\exp(-a x^2)\) is another Gaussian in the momentum variable. Here, there is a discrete analog that emerges from the bitwise exponentials plus the QFT phases.

\subsection{Final State Expression and Gaussian Profile}

\begin{lemma} Let \(\theta_j\) be angles assigned to qubits indexed by \(j = 0, 1, \dots, n-1\). Consider the product state

\begin{equation}
|\Phi\rangle 
= 
\bigotimes_{j=0}^{n-1}
\Bigl(
\cos(\theta_j) \, |0\rangle 
+ 
\sin(\theta_j) \,
e^{\,2\pi i \,\sum_{m=1}^{n-k+1} 
j_{n-k+m} \, 2^{-m}
}
|1\rangle
\Bigr),
\end{equation}

where additional phases may appear in the single-qubit amplitudes to reflect bitwise encodings. Suppose this state is followed by an \(n\)-qubit quantum Fourier transform (QFT). Denote the final amplitude associated with the computational basis index \(m\) by \(|\psi\rangle_m\). Then under a normalization factor \(1 / \sqrt{2^n}\), one obtains an expression of the approximate form

\begin{equation}
|\psi\rangle_m 
=
\frac{1}{\sqrt{2^n}}
\prod_{j=0}^{n-1}
\Bigl(
\cos(\theta_j)
\;+\;
\sin(\theta_j)\,e^{\,i\,2\pi\,\Delta_{m,j}}
\Bigr),
\end{equation}

where \(\Delta_{m,j}\) is a phase term that depends on the binary representation of \(m\) and the contributions from qubit \(j\) \cite{Hu2019}. For suitable choices of \(\theta_j\) related to an overall decay parameter, these amplitudes approximate a discrete Gaussian distribution \(\exp(-\alpha m^2)\) (up to normalization and index shifts), which justifies why this procedure yields a near-Gaussian state in the computational basis.
\end{lemma}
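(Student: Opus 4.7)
The plan is to derive the product formula directly by exploiting two compatible factorizations: the bitwise tensor structure of $|\Phi\rangle$ and the multiplicative decomposition of the QFT kernel $e^{2\pi i x m / 2^n}$ over the binary digits of $x$. First, I would write $x = \sum_{j=0}^{n-1} x_j 2^j$ with $x_j \in \{0,1\}$, so that $|\Phi\rangle = \sum_x \alpha_x |x\rangle$ with $\alpha_x = \prod_j a_j^{(x_j)}$, where $a_j^{(0)} = \cos\theta_j$ and $a_j^{(1)} = \sin\theta_j\, e^{i \varphi_j}$ collects whatever per-qubit phase the hypothesis specifies. Then, since $x m / 2^n = \sum_j x_j \cdot m \cdot 2^{j-n}$, the QFT exponential splits as $\prod_j e^{2\pi i x_j \cdot m \cdot 2^{j-n}}$.

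Combining these factorizations, the sum over the $2^n$ binary strings $(x_0, \ldots, x_{n-1})$ decouples into a product of $n$ independent two-term sums, one per qubit. Each such sum evaluates to $\cos\theta_j + \sin\theta_j\, e^{i 2\pi \Delta_{m,j}}$, where $\Delta_{m,j} = m \cdot 2^{j-n} + \varphi_j/(2\pi)$, which is precisely the advertised form; the $1/\sqrt{2^n}$ prefactor is inherited directly from the QFT normalization. Up to this point the argument is essentially bookkeeping, specializing the standard fact that the QFT of a product state is itself a product state to the angle/phase pattern of the hypothesis.

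The closing claim that these amplitudes approximate a discrete Gaussian would follow from expanding $\log|\psi_m|^2 = -n\log 2 + \sum_j \log(1 + \sin(2\theta_j)\cos(2\pi \Delta_{m,j}))$ and retaining second-order terms in the small quantities $\sin\theta_j \approx e^{-\beta j^2}$. The dominant contributions come from a narrow band of qubit indices where $\beta j^2 = \mathcal{O}(1)$, and the surviving quadratic-in-$m$ term reproduces $-\alpha\, m^2$ for some effective $\alpha$, in analogy with the continuous identity $\widehat{e^{-ax^2}} \propto e^{-k^2/4a}$. Normalization is automatic from unitarity of the QFT.

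I expect the main obstacle to lie not in the product derivation itself, but in making the Gaussian approximation quantitative. One has to control the truncation error of the Taylor expansion, the aliasing between the discrete Fourier sum and its continuous analog, and the contribution of small-$j$ qubits where the approximation $\sin(\theta_j/2) \approx e^{-\beta j^2}$ is least tight. Since the lemma only asserts an approximate shape, the cleanest presentation is probably to prove the exact product form rigorously and then present the Gaussian fit as a second-order expansion with an explicit error term bounded in $n$, $\beta$, and $|m - 2^{n-1}|$.
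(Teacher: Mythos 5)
Your proposal follows essentially the same route as the paper's own proof: expand $|\Phi\rangle$ in the computational basis, use the bitwise factorization of the QFT kernel $e^{2\pi i x m/2^n}=\prod_j e^{2\pi i x_j m 2^{j-n}}$ to decouple the sum into a product of two-term factors $\cos\theta_j+\sin\theta_j e^{i2\pi\Delta_{m,j}}$, and then argue the Gaussian shape by a small-angle expansion of the resulting product. If anything, your treatment is more explicit than the paper's (you give $\Delta_{m,j}$ concretely and propose a $\log|\psi_m|^2$ expansion with an error term, whereas the paper leaves both the phase $\Psi_{m,j}$ and the Gaussian comparison at a qualitative level), so the approach is sound and matches the published argument.
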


\begin{proof}
Beginning with all qubits in $\ket{0}$, apply single-qubit operations so that qubit $j$ is in 
\begin{equation}
\cos(\theta_j)\,\ket{0} 
\;+\;
\sin(\theta_j)\,e^{\,i\,\phi_j(x)}\,\ket{1},
\end{equation}
where $\phi_j(x)$ can often be taken as zero or a known bit-phase.  The joint state is then 
\begin{equation}
\ket{\Phi}
\;=\;
\bigotimes_{j=0}^{n-1}
\Bigl(\cos(\theta_j)\ket{0} + \sin(\theta_j)e^{\,i\,\phi_j(x)}\ket{1}\Bigr).
\end{equation}
Written in the computational basis $\ket{x}$ for $x=0,\dots,2^n -1$ with bit $x_j\in\{0,1\}$ in position~$j$, the amplitude is
\begin{equation}
\alpha_x
\;=\;
\prod_{j=0}^{n-1}
\Bigl[
\cos(\theta_j)
\Bigr]^{1-x_j}
\Bigl[
\sin(\theta_j)\,e^{\,i\,\phi_j(x)}
\Bigr]^{x_j}.
\end{equation}
When the QFT acts on $\ket{x}$, it produces 
\begin{equation}
\mathrm{QFT}\,\ket{x}
\;=\;
\frac{1}{\sqrt{2^n}}
\sum_{m=0}^{2^n-1}
e^{\,\tfrac{2\pi i\,x\,m}{2^n}}\ket{m},
\end{equation}
so the final amplitude of $\ket{m}$ is
\begin{equation}
\beta_m
\;=\;
\frac{1}{\sqrt{2^n}}
\sum_{x=0}^{2^n-1}
\alpha_x\,
e^{\,\tfrac{2\pi i\,x\,m}{2^n}}.
\end{equation}
Factoring the sum in terms of each bit $x_j$ allows grouping of common terms, yielding
\begin{equation}
\beta_m
\;=\;
\frac{1}{\sqrt{2^n}}
\prod_{j=0}^{n-1}
\Bigl[\,
\cos(\theta_j)
\;+\;
\sin(\theta_j)\,e^{\,i\,\Psi_{m,j}}
\Bigr],
\end{equation}
for a phase $\Psi_{m,j}$ capturing both local and QFT-induced factors.  Squaring its modulus gives 
\begin{equation}
|\beta_m|^2
\;=\;
\frac{1}{2^n}\,
\prod_{j=0}^{n-1}
\Bigl|
\cos(\theta_j) 
\;+\;
\sin(\theta_j)\,e^{\,i\,\Psi_{m,j}}
\Bigr|^2.
\end{equation}
A common choice is to set $\theta_j$ so that $\sin(\theta_j)$ decays quickly in $j$, for instance $\theta_j=2\,\arctan(e^{-\beta j^2})$.  This makes $\sin(\theta_j)$ small for large $j$, heavily suppressing amplitudes where higher qubits are flipped. One then observes that the phases $\Psi_{m,j}$, which are typically proportional to integer combinations of $m$ and $j$, interfere in a manner that rearranges these decaying factors into a distribution over $m$ resembling a Gaussian envelope.  

To see why it approximates a discrete Gaussian, one notes that $\sin(\theta_j)\approx e^{-\beta j^2}$ for large $j$, and the collective product over $j$ leads to a profile in $m$ that is sharply peaked for small $|m|$ (or near a shifted center if extra phase terms are included).  Expanding the product in an exponential series and comparing to a form like $\exp(-\alpha\,m^2)$ shows that the main contributions come from interference regions where the overall phase is coherent, while the tails are exponentially suppressed.  This is akin to the continuous Gaussian's invariance under the continuous Fourier transform, transferred here to a discrete setting with carefully chosen $\theta_j$. A suitable normalization factor accounts for the overall probability, ensuring that
\begin{equation}
|\beta_m|^2
\;\approx\;
C\,
\exp\bigl(-\alpha\,m^2\bigr)
\end{equation}
for constants $C$ and $\alpha$ depending on the decay scale of $\theta_j$. Identifying $m$ with a real interval then confirms that these amplitudes constitute an approximate discrete Gaussian distribution in $m$. 
\end{proof}

This shows that the circuit effectively transforms bitwise exponentials into a \textbf{near-Gaussian} distribution of final indices \cite{Draper2004}. This addresses the fundamental question of why one can achieve a near-Gaussian shape from a product of exponentials by applying a QFT. The choices of angles \(\theta_j\) set the decay, and the QFT phases spread these amplitudes in a manner consistent with a Gaussian envelope.

\subsection{Complexity Considerations and Pruning of Small Angles}

It is known that a straightforward QFT circuit on \(n\) qubits uses a series of controlled-phase gates and a set of Hadamard gates, typically leading to a gate count scaling as \(\mathcal{O}(n^2)\) \cite{Montanaro2015}. In practice, many of these phases are very small, on the order of \(\pi / 2^j\). Skipping or pruning these gates for large \(j\) often introduces only a slight error, thereby significantly reducing the overall gate count. This leads to a near-linear scaling in terms of \(n\) for the number of controlled-phase gates used, once a threshold \(\delta\) is chosen such that any phase below \(\delta\) is neglected \cite{Woerner2019}. In this work, we use \(\delta=\frac{\pi}{2^9}\approx0.01\), which is a good trade-off between error and complexity.

We now derive a simple, explicit lower bound on the state‐preparation fidelity when omitting all controlled‑phase gates in the QFT whose rotation angles satisfy \(|\phi|<\delta\).  The bound depends only on the pruning threshold \(\delta\) and the total number of qubits \(n\).

\begin{lemma}
Let \(U_{\rm QFT}^{\rm full}\) be the ideal \(n\)-qubit QFT and \(U_{\rm QFT}^{(\delta)}\) be the same circuit with \emph{all} controlled‑phase gates of angle \(\phi<\delta\) removed.  Suppose the product‑state preparation before the QFT is exact.  Then for any input state \(\ket{\Phi}\),
\begin{equation}
  \bigl\|\,U_{\rm QFT}^{(\delta)}\ket{\Phi}
      \;-\;
      U_{\rm QFT}^{\rm full}\ket{\Phi}\bigr\|
  \;\le\;
  \frac{(n-1)\,\delta}{2}.
\end{equation}
Consequently, if
\begin{equation}
  \ket{\Psi_{\rm full}}
  =U_{\rm QFT}^{\rm full}\ket{\Phi}
  \quad\text{and}\quad
  \ket{\Psi_{\delta}}
  =U_{\rm QFT}^{(\delta)}\ket{\Phi},
\end{equation}
then their fidelity satisfies
\begin{equation}
  F \;=\;\bigl|\langle \Psi_{\rm full}\!\mid\!\Psi_{\delta}\rangle\bigr|^2
  \;\ge\;
  1 \;-\;\frac{(n-1)^2\,\delta^2}{4}
  \;\ge\;
  1 \;-\;\frac{n^2\,\delta^2}{4}.
\end{equation}
\end{lemma}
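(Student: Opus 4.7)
The plan is to bound the state-vector distance $\bigl\|U_{\rm QFT}^{(\delta)}\ket{\Phi} - U_{\rm QFT}^{\rm full}\ket{\Phi}\bigr\|$ by a standard telescoping argument over the individual gates of the QFT, and then convert that bound into the stated fidelity inequality via the standard relation $1-F\le\|\ket{\psi}-\ket{\phi}\|^2$.

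First I would decompose the ideal circuit as an ordered product of elementary gates $U_{\rm QFT}^{\rm full}=G_N G_{N-1}\cdots G_1$ (Hadamards, controlled phases, and swaps), and let $\mathcal{P}$ denote the subset of indices $i$ for which $G_i$ is a controlled-phase gate whose rotation angle satisfies $|\phi_i|<\delta$. The pruned circuit $U_{\rm QFT}^{(\delta)}$ is obtained by replacing each such $G_i$ by the identity. Applying the telescoping identity
\begin{equation}
A_1\cdots A_N - B_1\cdots B_N \;=\; \sum_{k=1}^{N} A_1\cdots A_{k-1}(A_k-B_k)B_{k+1}\cdots B_N,
\end{equation}
together with unitary invariance of the operator norm, reduces the problem to the sum of per-gate errors
\begin{equation}
\bigl\|U_{\rm QFT}^{\rm full}\ket{\Phi} - U_{\rm QFT}^{(\delta)}\ket{\Phi}\bigr\| \;\le\; \sum_{i\in\mathcal{P}} \|G_i - I\|_{op}.
\end{equation}

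Each individual summand is immediate: for a pruned controlled-phase gate $CR_\phi$ one computes $\|CR_\phi - I\|_{op} = |e^{i\phi}-1| = 2|\sin(\phi/2)| \le |\phi| < \delta$. The substantive step is to bound the resulting sum $\sum_{i\in\mathcal{P}}|\phi_i|$ by $(n-1)\delta/2$. Here I would exploit the geometric structure of the QFT angles: within each block of controlled-phase gates sharing a common target qubit, the angles form a halving sequence $2\pi/2^k$, so the pruned subsequence is a geometric tail controlled by twice its largest entry, which is itself strictly below $\delta$. Combining the block-wise contributions and using that at most $n-1$ blocks contribute nontrivially, with a careful accounting of the geometric suppression within each block, yields the claimed $(n-1)\delta/2$ bound. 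I expect this combinatorial/geometric-series bookkeeping to be the main obstacle, since the telescoping reduction and the per-gate estimate are essentially routine but the constant $1/2$ requires squeezing the counting tightly.

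Finally, I would convert the state-distance bound into the fidelity statement using the textbook inequality $1 - |\langle\psi|\phi\rangle|^2 \le \bigl\|\ket{\psi}-\ket{\phi}\bigr\|^2$ for normalized states, which follows from $1 - |\langle\psi|\phi\rangle| \le 1 - \mathrm{Re}\langle\psi|\phi\rangle = \tfrac{1}{2}\|\ket{\psi}-\ket{\phi}\|^2$ combined with $1+|\langle\psi|\phi\rangle|\le 2$. Applying this to $\ket{\Psi_{\rm full}}$ and $\ket{\Psi_\delta}$ and inserting the distance bound yields $F \ge 1 - (n-1)^2\delta^2/4$, and the weaker form $F \ge 1 - n^2\delta^2/4$ is immediate from $(n-1)^2 \le n^2$.
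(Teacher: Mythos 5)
Your telescoping decomposition, the per-gate estimate \(\|CR_\phi - I\|_{op} = 2|\sin(\phi/2)|\le|\phi|\), and the conversion \(1-F\le\|\ket{\psi}-\ket{\phi}\|^2\) are all correct and coincide in spirit with the paper's argument (triangle inequality over the omitted gates, then \(F\ge 1-\varepsilon^2\)). The genuine gap is exactly the step you flag as the main obstacle: bounding the accumulated error by \((n-1)\delta/2\). Your own sketch cannot deliver this constant. Within one target-qubit block the pruned angles form a geometric tail whose sum is at most \(2\delta\), and with up to \(n-1\) blocks your route gives \(2(n-1)\delta\), a factor of four away from the claim. Worse, no tighter bookkeeping can rescue the intermediate inequality \(\sum_{i\in\mathcal{P}}|\phi_i|\le(n-1)\delta/2\), because it is false in general: the full QFT contains \(n-k\) controlled-phase gates of angle \(\pi/2^{k}\), so if \(\delta\) sits just above \(\pi/2^{k_0}\) for some fixed \(k_0\) and \(n\) is large, the pruned-angle sum is approximately \(\sum_{k\ge k_0}(n-k)\pi/2^{k}\approx 2(n-k_0)\delta\), which exceeds \((n-1)\delta/2\). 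So your proposal proves the lemma only with a weaker constant (still \(O(n\delta)\), hence the same qualitative fidelity scaling), not the inequality as stated.

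For comparison, the paper reaches the stated constant by two different moves: it asserts (via citation) that deleting a single controlled-\(R_k\) perturbs the global unitary by at most \(\phi_k/2\) — a factor-2 saving over \(\|CR_\phi-I\|\) that is only justified up to a global phase, which is harmless for the fidelity claim but not literally for the vector-norm claim — and it counts at most one pruned gate per distance class \(k\in\{1,\dots,n-1\}\), i.e.\ at most \(n-1\) pruned gates each contributing \(\le\delta/2\). Your accounting, which keeps all \(n-k\) gates of each angle class, is the more careful one, and it shows that plain gate-counting does not reach \((n-1)\delta/2\); but as a proof of the lemma as written, the substantive quantitative step in your proposal remains open, and the specific route you outline for it fails.
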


\begin{proof}
The standard \(n\)-qubit QFT uses, for each pair of qubits separated by distance \(k\in\{1,\dots,n-1\}\), a controlled‑\(R_k\) gate of angle
\begin{equation}
  \phi_k \;=\;\frac{\pi}{2^{\,k}}.
\end{equation}
We prune (delete) \emph{all} such gates with \(\phi_k<\delta\); there are at most \(n-1\) gates in total, each with \(\phi_k\le\delta\).  From basic results on spectral‐norm perturbation of two‐qubit gates (e.g.\ \cite{coppersmith2002approximatefouriertransformuseful}), omitting a single controlled‑\(R_k\) changes the global unitary by at most \(\tfrac12\,\phi_k\).  By the triangle inequality and submultiplicativity,
\begin{equation}
  \bigl\|\,U_{\rm QFT}^{(\delta)} 
         - U_{\rm QFT}^{\rm full}\bigr\|
  \;\le\;
  \sum_{\substack{k=1 \\ \phi_k<\delta}}^{n-1}
    \frac{\phi_k}{2}
  \;\le\;
  \sum_{k=1}^{n-1}
    \frac{\delta}{2}
  \;=\;
  \frac{(n-1)\,\delta}{2}.
\end{equation}
Hence for any normalized input \(\ket{\Phi}\),
\begin{equation}
  \bigl\|\,U_{\rm QFT}^{(\delta)}\ket{\Phi}
      \;-\;
      U_{\rm QFT}^{\rm full}\ket{\Phi}\bigr\|
  \;\le\;
  \bigl\|\,U_{\rm QFT}^{(\delta)} - U_{\rm QFT}^{\rm full}\bigr\|
  \;\le\;\frac{(n-1)\,\delta}{2}.
\end{equation}
Denote \(\ket{\Psi_{\delta}}=U_{\rm QFT}^{(\delta)}\ket{\Phi}\) and \(\ket{\Psi_{\rm full}}=U_{\rm QFT}^{\rm full}\ket{\Phi}\).  Then
\begin{equation}
  \bigl\|\Psi_{\delta}-\Psi_{\rm full}\bigr\|
  \;\le\;
  \frac{(n-1)\,\delta}{2}.
\end{equation}
For two pure states, the relation between their 2‑norm distance \(\varepsilon\) and fidelity \(F=|\langle\Psi_{\rm full}|\Psi_{\delta}\rangle|^2\) is
\begin{equation}
  \varepsilon
  =\bigl\|\Psi_{\delta}-\Psi_{\rm full}\bigr\|
  \quad\Longrightarrow\quad
  F \;\ge\;1 - \varepsilon^2.
\end{equation}
Substituting \(\varepsilon\le\,(n-1)\,\delta/2\) gives
\begin{equation}
  F \;\ge\; 1 - \frac{(n-1)^2\,\delta^2}{4}
       \;\ge\; 1 - \frac{n^2\,\delta^2}{4}.
\end{equation}
This completes the proof.
\end{proof}

If, for example, \(n=16\) and one prunes all QFT gates below \(\delta=0.0123\), then
\begin{equation}
  F \;\ge\; 1 - \frac{16^2\times0.0123^2}{4}
       = 1 - 0.00968
       = 0.99032.
\end{equation}
Thus pruning in the QFT yields a fidelity \(\ge99.032\%\).

In addition, applying a single \(X\) gate to the highest qubit can reorder the basis states so that the distribution is aligned with a Gaussian \cite{Stamatopoulos2020}. This reordering step is trivial in cost, but it is important in matching the final enumeration of states to the intended domain mapping.

\subsection{Error Estimates and Measuring Metrics}

To measure the difference between a target amplitude distribution \(\bigl\{\sqrt{G(x_k)}\bigr\}_{k=0}^{2^n-1}\) and the actual distribution \(\{\beta_k\}\), we used two metrics. The first metric is the mean-squared error (MSE) \cite{Rebentrost2014}. If 
\begin{equation}
|\psi_{\mathrm{target}}\rangle = \sum_{k=0}^{2^n -1} \sqrt{G(x_k)} \, |k\rangle
\end{equation}
is the perfectly Gaussian state, and 
\begin{equation}
|\Psi\rangle = \sum_{k=0}^{2^n -1} \beta_k \, |k\rangle
\end{equation}
is our approximation, then one might consider
\begin{equation}
\mathrm{MSE} = \frac{1}{2^n} \sum_{k=0}^{2^n -1} \Bigl|\sqrt{G(x_k)} - \beta_k\Bigr|^2.
\end{equation}
Depending on the chosen angles \(\theta_j\), the threshold for pruning small QFT phases, and any additional hardware-level sources of noise, the MSE can be made small at the cost of more gates or more precise calibration \cite{Lloyd2014}. Section 4 of this paper will show that in noiseless simulations, one can often achieve MSE values on the order of \(10^{-16}\) to \(10^{-8}\) with moderate \(n\), depending on how aggressively the angles and phases are pruned.

The second metric is the Kullback–Leibler Divergence (KL-Divergence), which measures how much a model probability distribution \(Q\) is different from a true probability distribution \(P\). Mathematically, it is defined as
\begin{equation}
D_{\mathrm{KL}}(P \,\|\, Q) 
= \sum_{x \in \mathcal{X}} 
  P(x) \, \log\Bigl(\frac{P(x)}{Q(x)}\Bigr).
\end{equation}
In our case, it is
\begin{equation}
    D_{\mathrm{KL}}(P \,\|\, |\psi\rangle) = \sum_{x\in[-2, 2)} P(x) \log \!\left(\frac{P(x)}{|\psi\rangle_x|^2 }\right),
\end{equation}
where \(P(x)\) is the true gaussian function, \(|\psi\rangle\) is the discrete state vector. This measures the difference of the gaussian state measured probability and the true normalized gaussian distribution more effectively than MSE.

The third metric is the fidelity \(|\langle\Psi|\psi\rangle|^2\), where \(|\Psi\rangle\) is the true gaussian distribution vector and \(|\psi\rangle\) is the approximate gaussian state vector. Fidelity values range from 0 to 1, where 1 represents perfect similarity, and 0 represents complete dissimilarity.

\section{Experimentation and Discussion}

We now provide numerical simulations and some brief hardware validation for our proposed circuit \cite{Harrow2009}. We focus on how the distribution compares to the ideal discrete Gaussian, how the gate count scales with pruning thresholds, and how the final states behave under different decay rates.

We begin by using a standard gate-based noiseless quantum simulator based on the Classiq platform to obtain exact state vectors for up to a certain number of qubits \cite{Rebentrost2018}. We then computed the MSE against the reference distribution. Different sets of angles \(\theta_j\) were tested, with the parameter \(\beta=\frac{5}{2\lambda}=0.25\) so that the final distribution had a standard deviation corresponding to the desired \(\lambda=1\) for the continuous Gaussian restricted to \([-2,2)\).

We found that as we increased \(n\), the resolution improved and the MSE generally decreased because the discretization becomes finer and the scale of the Gaussian decreases due to normalization \cite{Ciliberto2018}. This phenomenon underscores the importance of Gaussian state preparation for moderate or large \(n\): the ability to approximate a smoothly varying distribution is enhanced, but the challenge of circuit depth also grows if one does not prune small angles in the QFT. As mentioned before, by choosing a pruning threshold of around \(\delta=10^{-2}\), one can observe that the final distribution remains extremely close to the ideal, yet the number of controlled-phase gates is significantly reduced \cite{Ciliberto2018}.

Fig.~\ref{fig:amplitude_convergence} compares final amplitude distribution at different qubit resolutions \cite{Montanaro2015} \(n\in\{8, 12\}\).

\begin{figure}[h!]
\centering
\includegraphics[width=\linewidth]{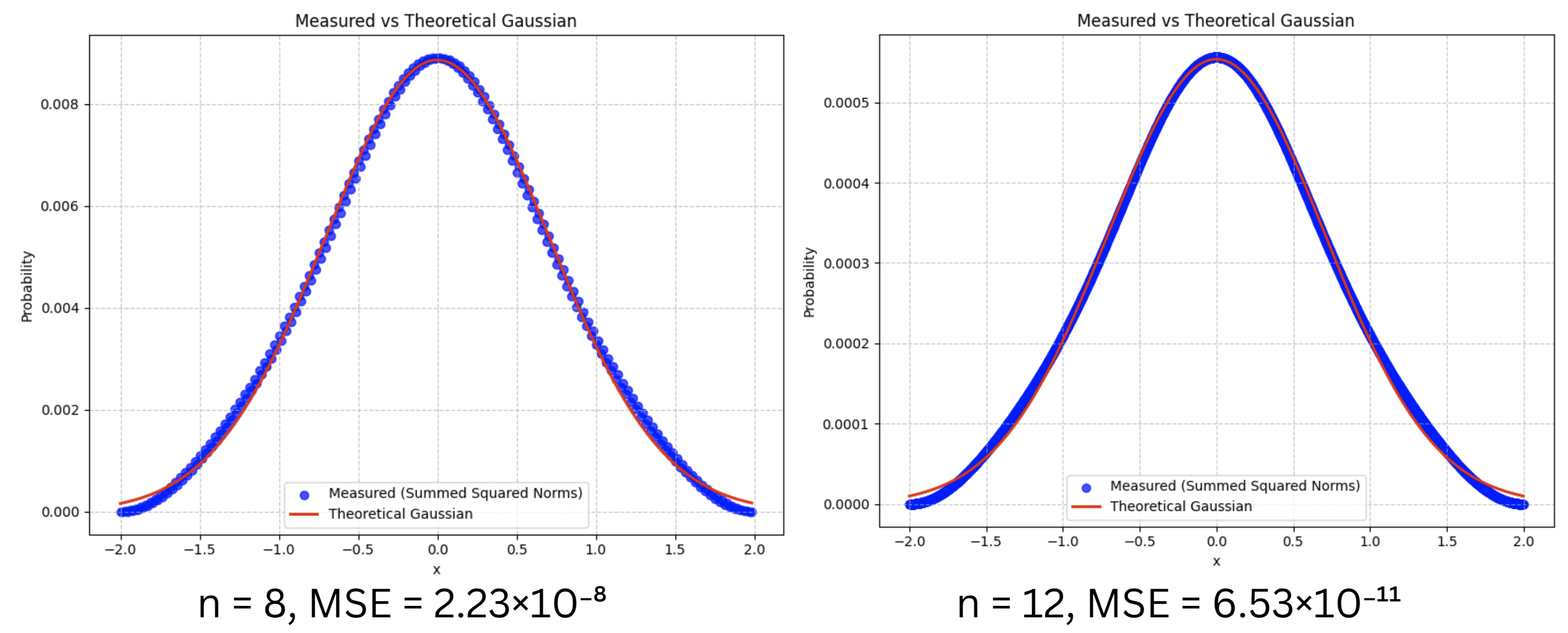}
\caption{Comparison of the Gaussian preparation circuit probability distribution ran on a noiseless simulator with the ideal discrete Gaussian for \(n=8\) and \(n=12\) (\(\delta=0.01\)).}
\label{fig:amplitude_convergence}
\end{figure}

\begin{figure}[h!]
\centering
\includegraphics[width=\linewidth]{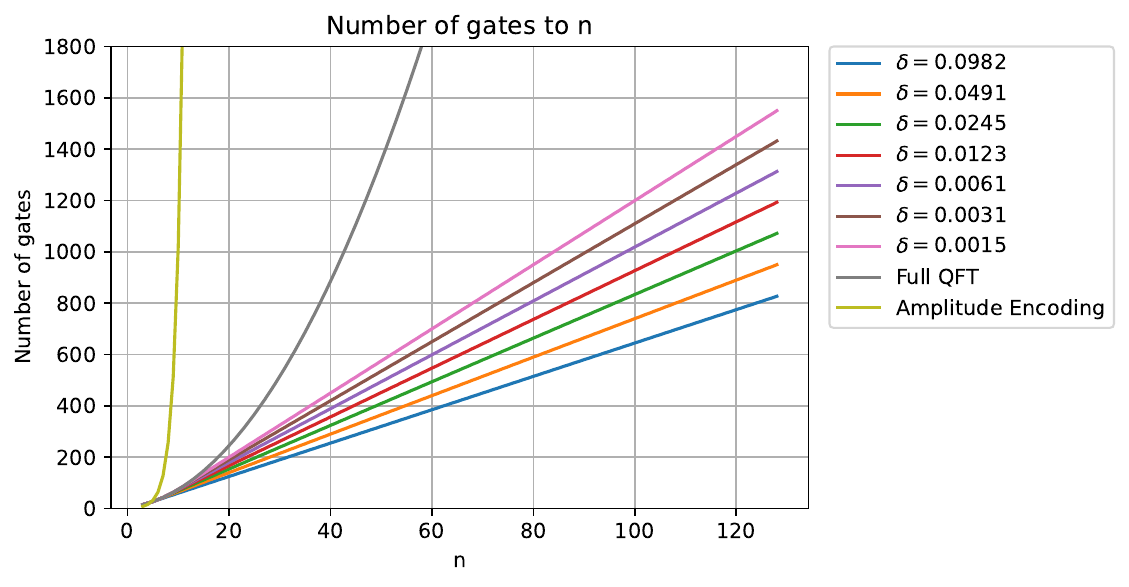}
\caption{Number of gates to \(n\) for different QFT angle threshold \(\delta\), full QFT, and amplitude encoding.}
\label{fig:gate_pruning_scaling}
\end{figure}

\begin{figure}[h!]
\centering
\includegraphics[width=\linewidth]{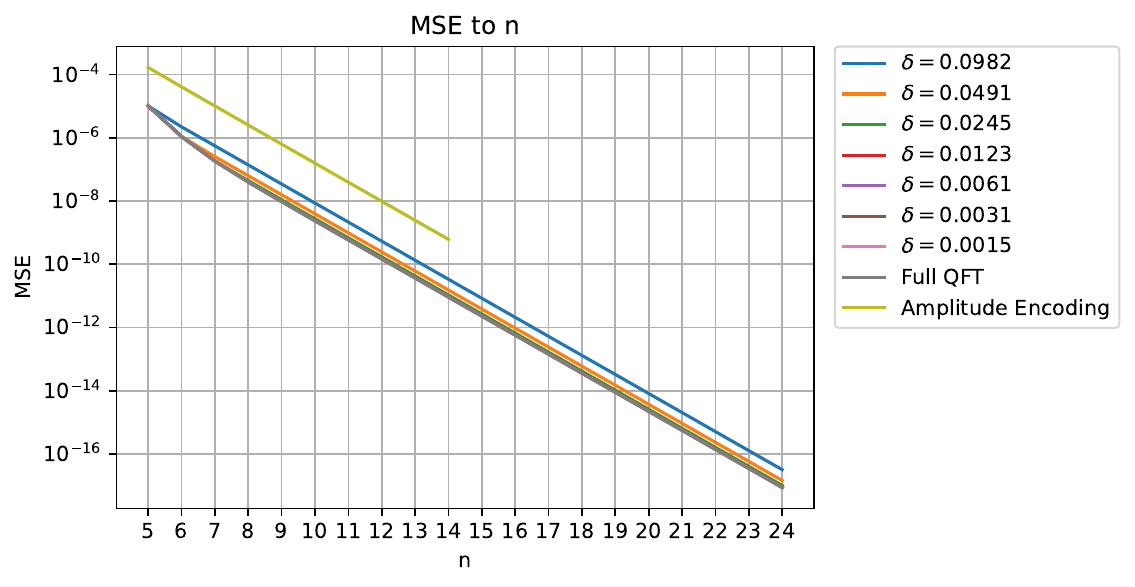}
\caption{MSE to \(n\) for different QFT angle threshold \(\delta\), full QFT, and amplitude encoding.}
\label{fig:gate_pruning_mse}
\end{figure}

\begin{figure}[h!]
\centering
\includegraphics[width=\linewidth]{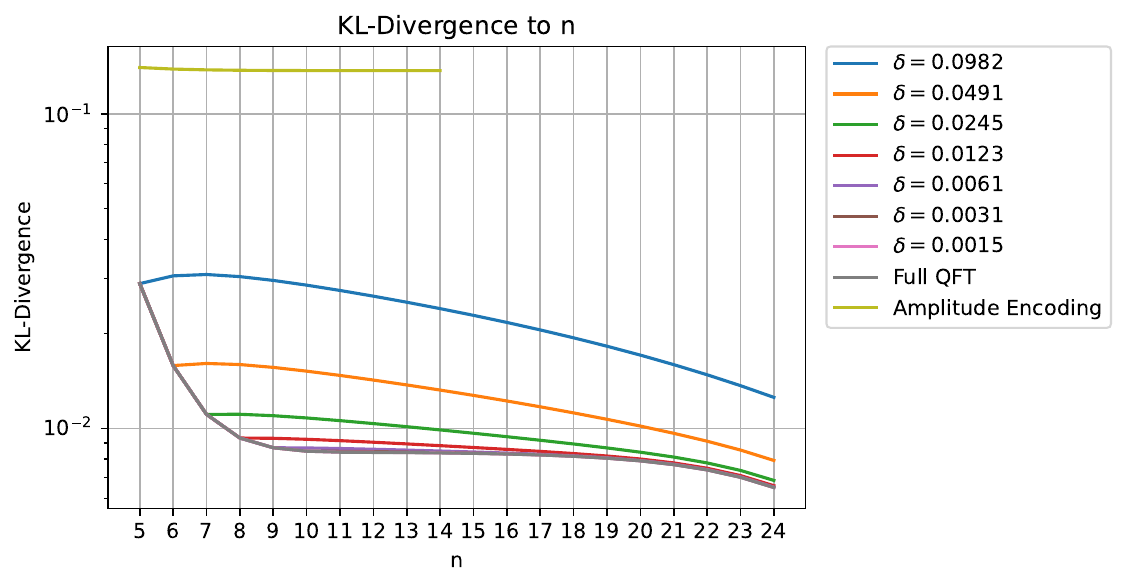}
\caption{KL-Divergence to \(n\) for different QFT angle threshold \(\delta\), full QFT, and amplitude encoding. (y-axis is log scale)}
\label{fig:gate_pruning_kl}
\end{figure}

\begin{figure}[h!]
\centering
\includegraphics[width=\linewidth]{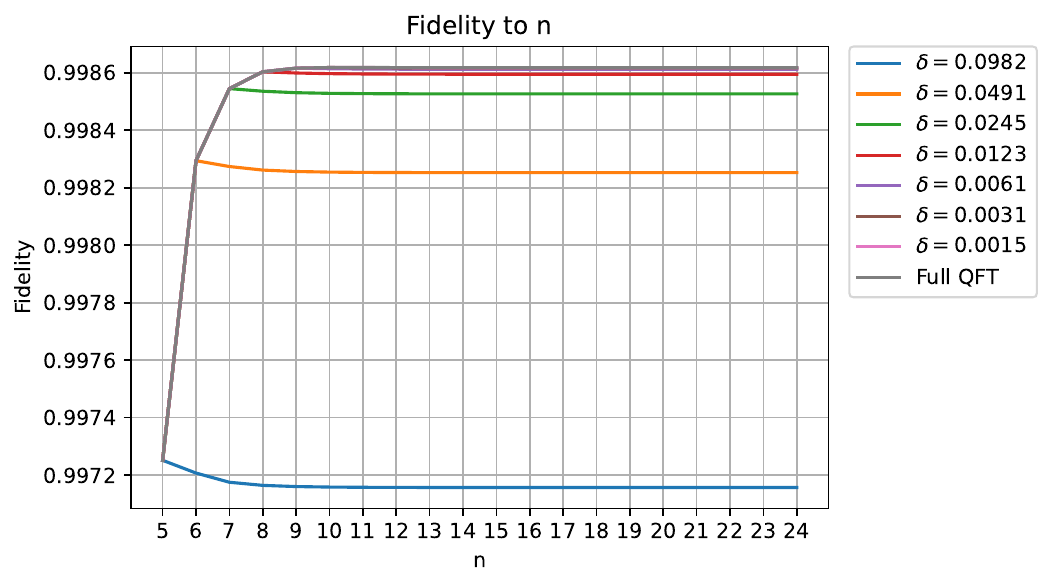}
\caption{Fidelity to \(n\) for different QFT angle threshold \(\delta\) and full QFT.}
\label{fig:gate_pruning_fidelity}
\end{figure}

The practical impact of achieving near-linear gate scaling cannot be overstated, because real devices remain limited by decoherence, crosstalk, and other noise sources that scale adversely with circuit depth \cite{Draper2000}. A near-linear approach in the QFT block can thus allow for significantly larger \(n\) (as shown in Fig.~\ref{fig:gate_pruning_scaling}) than would otherwise be feasible if we used a naive \(\mathcal{O}(n^2)\) amplitude encoding \cite{Draper2004} with marginal increase in error (as shown in Fig.~\ref{fig:gate_pruning_mse}). This is particularly relevant if the quantum algorithm that relies on the prepared Gaussian state demands minimal overhead in the state initialization phase. Fig.~\ref{fig:gate_pruning_kl} also shows that the probability distribution produced by amplitude encoding has a high KL-Divergence, while those produced by our algorithm have a decreasing trend of KL-Divergence as number of qubits increase, showing that our algorithm achives better results as the number of qubits increase due to finer discrete points. Also, one can see that when \(\delta\ge0.0123\), the change in KL-Divergence is marginal, solidifying our choice of \(\delta=0.0123\) as the threshold for gate pruning in QFT. Fig.~\ref{fig:gate_pruning_fidelity} shows a similar result as Fig.~\ref{fig:gate_pruning_kl}, as when \(\delta\ge0.0123\) and \(n\ge10\), the change in fidelity is marginal.

We also observed that for certain larger decay rates \(\lambda\), some minor deviations appeared in the tails, consistent with the notion that extremely sharp exponentials can magnify small errors from omitted phases \cite{Montanaro2015}. However, these deviations often do not significantly degrade the overall fidelity.

We then performed a small-scale hardware test on IBMQ \textit{Kyiv} \cite{Woerner2019} with results shown in Fig.~\ref{fig:ibm_result}. We used \(n\in \{4, 5, 6\}\) qubits, prepared the product state with the \(R_y\) angles described above, then applied a QFT that pruned some small angles with threshold \(\delta=0.01\), and measured the final distribution across 50,000 shots. The distribution shape was clearly reminiscent of the intended Gaussian peak for 4 and 5 qubits, although hardware noise broadened the distribution more than in the noiseless simulation. For 6 qubits, the Gaussian peak is only slightly visible due to hardware noise and error.

\begin{figure}[h!]
\centering
\includegraphics[width=\linewidth]{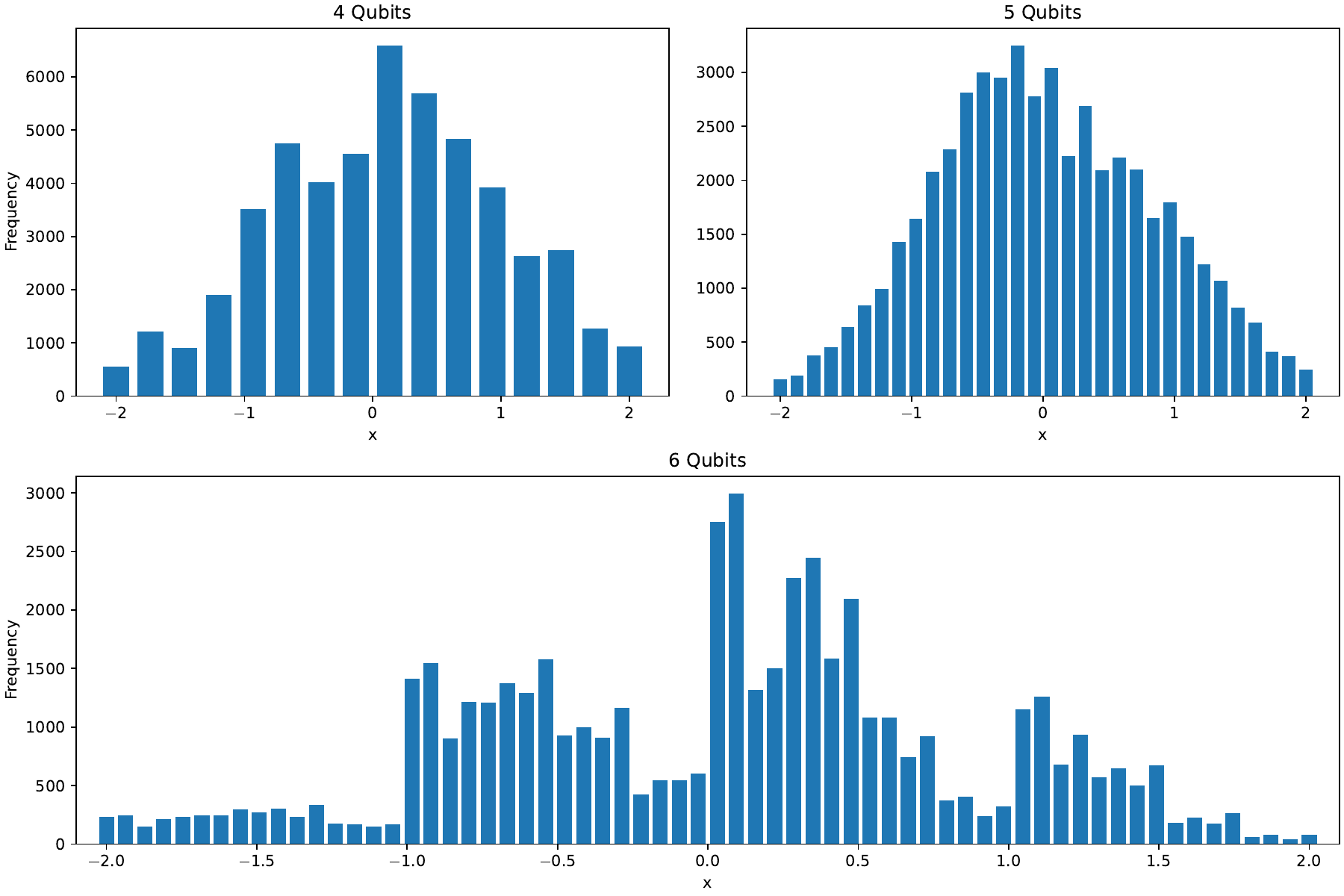}
\caption{Frequency distribution of running the circuit on IBMQ \textit{Kyiv} with \(n\in \{4, 5, 6\}\), \(\delta=0.01\), and 50,000 shots.}
\label{fig:ibm_result}
\end{figure}

This is consistent with typical gate error rates on devices in the era of near-term quantum computing \cite{Rebentrost2014}. We believe that with further error mitigation or improved hardware generations, the technique can be validated at higher qubit numbers. Regardless, the test underscores that the circuit is sufficiently compact to be run on current hardware, at least in small-scale demonstrations, thus stressing again the crucial role that efficient Gaussian state preparation can play in bridging algorithmic requirements and real-world devices \cite{Rebentrost2014}.

\section{Future Works and Conclusion}

Gaussian state preparation has a huge significance in quantum computing because many applications \cite{Stamatopoulos2020} benefit from states whose amplitudes is a smooth, continuous-variable distribution \cite{Lloyd2014}. Discrete approximations of Gaussian states are important for simulating physical systems governed by quadratic Hamiltonians, modeling sampling processes for financial applications, or encoding data in quantum machine learning tasks \cite{Rebentrost2018}.

This research has offered an algorithm that achieves this states by combining single-qubit \(R_y\) gates that induce a bitwise exponential distribution and the quantum Fourier transform that spreads these amplitudes into a near-Gaussian shape in the final basis \cite{Gily_n_2019}. By selectively omitting controlled-phase gates below a threshold \(\delta\), the circuit depth can be reduced from \(\mathcal{O}(n^2)\) to near \(\mathcal{O}(n)\), preserving high fidelity in practice.

Numerical simulations confirm that one can achieve small mean-squared errors, and preliminary hardware trials show a recognizable peak structure for small numbers of qubits \cite{Rebentrost2014}. We therefore conclude that this method is a promising candidate for scalable Gaussian state initialization on digital quantum computers, bridging theoretical requirements and practical implementation constraints.

In future research, a deeper exploration of other gaussian profiles beyond \(\lambda=1\), how to tune the angles in an adaptive manner, or how to incorporate advanced error mitigation strategies, may further boost accuracy \cite{Lloyd2014}. Additionally, integration with algorithms that explicitly require Gaussian initializations may provide immediate performance gains, for instance in quantum-enhanced data analysis or in partial differential equation solvers that rely on wavefunction-like initial states. The general principle that bitwise exponentials plus a QFT yield near-Gaussian final states should also motivate new directions in continuous-variable emulation and quantum signal processing. The implementation of this algorithm is available at the Classiq library: \verb|https://github.com/classiq/classiq-library|.

\section*{Acknowledgement}

We extend our sincere gratitude to Classiq for providing the essential platform and resources that facilitated the successful completion of this research \cite{Stamatopoulos2020}.

\bibliographystyle{IEEEtran}
\bibliography{references}

\end{document}